\newcommand{\mathd}{\mathrm{d}}
\newcommand{\mathe}{\mathrm{e}}
\newcommand{\hatr}{\hat{r}}
\newcommand{\muhat}{\hat{\mu}}
\newcommand{\myg}{g}
\newcommand{\mygstar}{g_*}
\newcommand{\mygfour}{g^{(4)}}
\newcommand{\Ptot}{P_{\mathrm{tot}}}
\newcommand{\Qtot}{Q_{\mathrm{tot}}}
\newcommand{\Nfunc}{\Psi}
\newtheorem{theorem}{Theorem}[section]
\newtheorem{corollary}[theorem]{Corollary}
\newenvironment{proof}[1][Proof]{\begin{trivlist}
\item[\hskip \labelsep {\bfseries #1}]}{\end{trivlist}}
\newcommand{\myqed}{\nobreak \ifvmode \relax \else
      \ifdim\lastskip<1.5em \hskip-\lastskip
      \hskip1.5em plus0em minus0.5em \fi \nobreak
      \vrule height0.75em width0.5em depth0.25em\fi}
\begin{document}
\title{A differential-equation model shows the proximate cause of the glass ceiling in European academia}
\author{Lennon \'O N\'araigh\footnote{Email address: \texttt{onaraigh@maths.ucd.ie}}}
\address{School of Mathematics and Statistics, University College Dublin, Belfield, Dublin 4, Ireland}

\date{\today}

\begin{abstract}
We introduce a model based on Ordinary Differential Equations to describe how
two mutually exclusive groups progress through a career hierarchy, whether in a single organization, or in an entire economic sector.  The intended application is to gender imbalance at the top of the academic hierarchy in European Universities, however, the model is entirely generic and may be applied in other contexts also. Previous research on gender imbalance in European universities has focused on large-scale
statistical studies. Our model represents a point of departure, as it is deterministic
(i.e. based on Ordinary Differential Equations).   The model requires a precise definition of the progression rates for the different groups through the hierarchy; these are key parameters governing the dynamics of career progression.
The progression rate for each group can be decomposed into a product: the proportion of group members at a low level in the hierarchy who compete for promotion to the next level a given year, multiplied by the in-competition success rate for the group in question.  Either of these two parameters can differ across the groups under consideration; this introduces a group asymmetry into the organization's composition.  We introduce a glass-ceiling index to summarize this asymmetry succinctly.  Using case studies from the literature, we  demonstrate how the mathematical framework can  pinpoint the proximate cause of the glass ceiling in European academia.
\end{abstract}

\maketitle

\section*{Introduction}

In Europe, women and men make up roughly equal numbers at the entry level of the academic career hierarchy.  However, European academia has a glass ceiling: the proportion of women at the top of the academic hierarchy is less than the proportion of women in the university system as a whole~\cite{she2015}.  To understand the root cause of this effect, a range of different factors has been investigated, including implicit bias~\cite{gvozdanovic2018implicit}, stereotypes~\cite{van2015gender}, the different emphasis placed on family life by women and men~\cite{santiago2012personal}, and the different institutional responsibilities  undertaken by women and men~\cite{poole1997international}.  And yet, conclusions on the role of these factors are still unclear 
(e.g.~\cite{ceci2015women,volker2015no}).  Therefore, in order to make progress in understanding the glass ceiling in European academia,  we propose instead to look at the proximate cause of the glass ceiling.  By this focus, it is proposed to bring some clarity to the above debate, with a view to formulating more precise research questions regarding the underlying cause of the glass ceiling in European academia.  

The present work is focused on Europe, this is justified by the availability of extensive official statistics and cross-country comparisons~\cite{she2015}, as well as large-scale studies of the promotion systems in Spain, France, and Italy~\citep{zinovyeva2010does,bosquet2014gender,de2015gender,de2015gender_b}.  The latter  are based on regression analyses whereby the probability of promotion is linked to gender, as well as other factors.  This is an important first step in understanding the proximate cause of the glass ceiling in European universities.  We emphasize that the mathematical model developed here is generic, and will have applicability beyond Europe as well.

This work is different but complementary to the above statistical-modelling approach.  We formulate an abstract model of an organizational hierarchy, broken into two mutually exclusive groups, the $P$-group and the $Q$-group.    In the present context, the $P$-group refers to men and the $Q$-group to women, however, the model is completely general, and can be applied equally well to other pairs of mutually exclusive groups which together make up the organization.  At the same time, and for simplicity, the organizational hierarchy is assumed to be binary, with an entry level and a managerial level.  Members of both groups progress from the entry level to the managerial level, but the progression rates for the $P$-group and the $Q$-group are not necessarily the same in the model.  We also extend the the model to hierarchies with more than two levels.  We emphasize  that the model applies equally to a single organization, or to an entire economic sector; we use the term `organization' throughout this article for definiteness.

\section{Formulation}

We set up a mathematical model of an idealized organization made up of $N$ employees.  We model how $N$ varies with time $t$, starting from a given set of initial conditions at $t=0$.  The employees can be categorized in two well-defined, non-overlapping groups, the $P$-group and the $Q$-group, with total populations $\Ptot(t)$ and $\Qtot(t)$, respectively, such that $\Ptot(t)+\Qtot(t)=N(t)$.
The organizational hierarchy is assumed to be simple, with two levels -- an entry level (labelled with a subscript 1) and a managerial level (labelled with a subscript 2).  As such, the $P$-group can be decomposed as $P_1(t)+P_2(t)=\Ptot(t)$, where $P_1$ denotes the number of members of the $P$-group at the entry level; a similar decomposition applies to the $Q$-group also.  Hence, $P_1(t)+P_2(t)+Q_1(t)+Q_2(t)=N(t)$.  A more complicated four-level model is also developed (Section~\ref{sec:application}).  The model is further assumed to have the following characteristics:
\begin{enumerate}[noitemsep]
\item \label{ass_time} Time is measured in years.
\item \label{ass_homo} Apart from the binarization of the population into the specified groups and career levels, the population is otherwise homogeneous.
\item \label{ass_N} The total organizational headcount grows according to $\mathd N/\mathd t=\lambda N$,
where $\lambda$ is a constant with  dimensions of $[\text{Number of indivudals}]^{-1}[\text{Year}]^{-1}$.
\item \label{ass_resign} There are no resignations, deaths in service, redundancies, or dismissals -- employees leave the organization only through retirements.
\item \label{ass_equal} The organization recruits members of both groups at equal rates.  Recruitment is  only at the entry level; access to the managerial level is by promotion only.    Once at the managerial level, employees cannot return to the entry level -- there is no `demotion' of managers.
\item \label{ass_retire} Employees of the $P$- and $Q$-groups retire at equal rates; employees at the different levels in the hierarchy retire at different rates.  
\item \label{ass_rectirex} There is an overall `crude' retirement rate set by the average length of service.
\item \label{ass_fixed} The total number of employees at the managerial level is constrained, with $(P_2+Q_2)/N=\varphi$, where $0<\varphi<1$ is a constant.  Correspondingly, $(P_1+Q_1)/N=1-\varphi$. 
\end{enumerate}
Further discussion about the appropriateness of these assumptions to a university setting is given in the {\textbf{Appendix}}.


\subsection*{Model equations}

Based on the above assumptions, we introduce the following pair of ODEs to describe the composition of the $P$-group:
\begin{subequations}
\begin{eqnarray}
\frac{\mathd P_1}{\mathd t}&=&s - r_1 P_1 - \mu P_1,\\
\frac{\mathd P_2}{\mathd t}&=& \mu P_1-r_2 P_2.
\end{eqnarray}%
\label{eq:model1}%
\end{subequations}%
Here, $s$ is the source function governing the rate at which individuals of the $P$-group are recruited into the organization; this quantity has dimensions of 
$[\text{Number of individuals}][\text{Year}]^{-1}$.  The source function $s$ depends on time.  The other coefficients in~\eqref{eq:model1} have the interpretation of rate coefficients, possibly time-dependent.
The rate coefficients $r_1$ and $\mu$ have physical units of $[\text{Percentage}][\text{Year}]^{-1}$; $r_1$ can be interpreted as the proportion of a $P_1$-individuals who retire per year (`retirement rate'), and $\mu$ can be interpreted as the proportion of all $P_1$-individuals who are promoted to the managerial level, per year (`progression rate').  Similarly, $r_2$ is the proportion of $P_2$-individuals who retire per year. 
Finally, it should be noted that~\eqref{eq:model1} is valid for $t>0$; at $t=0$, initial conditions apply, e.g. $P_1(t=0)=P_{10},\qquad P_2(t=0)=P_{20}$,  where $P_{10}$ and $P_{20}$ are constants.

The equations  for the $Q$-group are very similar to those already written down for the $P$-group:
\begin{subequations}
\begin{eqnarray}
\frac{\mathd Q_1}{\mathd t}&=&s - r_1 Q_1 - \mu' Q_1,\\
\frac{\mathd Q_2}{\mathd t}&=& \mu' Q_1-r_2 Q_2.
\end{eqnarray}%
\label{eq:model2}%
\end{subequations}%
The source function and the retirement rates are the same in~\eqref{eq:model1} and~\eqref{eq:model2}, as per Assumptions~\eqref{ass_equal}--\eqref{ass_retire}.  Equality of the source functions for the different groups is justified by the official statistics~\cite{she2015}.  The progression rate in~\eqref{eq:model2} is $\mu'$.  
The model therefore allows for asymmetric progression rates $\mu\neq \mu'$.

\eqref{eq:model2} is valid for $t>0$; at $t=0$, initial conditions again apply at $t=0$, e.g. $Q_1(t=0)=Q_{10},\qquad Q_2(t=0)=Q_{20}$. The initial conditions for the $P$- and $Q$-groups are not all independent, indeed, we must have $P_{10}+P_{20}+Q_{10}+Q_{20}=N_0$,
where $N_0$ is the total headcount at $t=0$; moreover, we must have $P_{10}+Q_{10}=(1-\varphi)N_0$ and $P_{20}+Q_{20}=\varphi N_0$.
The motivation for writing down Equations~\eqref{eq:model1}--\eqref{eq:model2} in their present form is that they represent a `conservation-of-people' or `balance' principle, whereby entry into the organization, and exit therefrom, are governed by well-defined mechanisms.  In particular, individuals enter the organization via recruitment and leave via retirement.  Individuals may move from one level in the hierarchy to the next -- the loss of one person from the entry level represents a gain at the managerial level.  As such~\eqref{eq:model1}--\eqref{eq:model2} represent the simplest possible set of equations that can be written down that describe this balance principle.

\section{Theoretical Analysis}

The basic model~\eqref{eq:model1}--\eqref{eq:model2} contains a plethora of parameters.  However, not all of these are independent.  We have proved a number of theorems whose purpose is to reduce the number of independent parameters down to a minimum.  We summarize these results here.  Detailed discussion is provided in the {\textbf{Appendix}}.
\begin{theorem}
The source term is not arbitrary; it is given by
\begin{equation}
s(t)=\tfrac{1}{2}\left(\lambda+\hatr\right)N_0\mathe^{\lambda t},
\label{eq:source}
\end{equation}
where $\hatr$ is the crude retirement rate $\hatr=r_1(1-\varphi)+r_2\varphi$.
\end{theorem}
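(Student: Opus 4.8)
The plan is to run a conservation-of-people argument at the level of the aggregated populations, where the asymmetry between the two progression rates $\mu$ and $\mu'$ conveniently drops out. First I would introduce the level totals $N_1 = P_1 + Q_1$ (entry level) and $N_2 = P_2 + Q_2$ (managerial level), and add the two $P$-equations in~\eqref{eq:model1} to the corresponding $Q$-equations in~\eqref{eq:model2}. This produces a pair of equations for $N_1$ and $N_2$ in which the combined promotion flux $F = \mu P_1 + \mu' Q_1$ appears as a sink in the $N_1$-equation and as an equal source in the $N_2$-equation.

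The key step is then to add these two aggregated equations together. Because $F$ enters with opposite signs at the two levels, it cancels identically; crucially, this works even when $\mu\neq\mu'$, so I never need to disentangle the flux into its $P$- and $Q$-contributions. What survives is the single balance $\mathd N/\mathd t = 2s - r_1 N_1 - r_2 N_2$, with $N = N_1 + N_2 = \Ptot + \Qtot$.

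To finish, I would invoke the structural assumptions. Assumption~\eqref{ass_fixed} fixes the level occupancies as $N_2 = \varphi N$ and $N_1 = (1-\varphi)N$, which collapses the retirement terms into $-[r_1(1-\varphi) + r_2\varphi]N = -\hatr N$. Assumption~\eqref{ass_N} supplies $\mathd N/\mathd t = \lambda N$, hence $N(t) = N_0\mathe^{\lambda t}$. Substituting both into the aggregated balance gives $\lambda N = 2s - \hatr N$, and solving algebraically for $s$ yields $s = \tfrac{1}{2}(\lambda + \hatr)N = \tfrac{1}{2}(\lambda+\hatr)N_0\mathe^{\lambda t}$, exactly~\eqref{eq:source}.

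I do not anticipate a genuine obstacle; the derivation is short and essentially algebraic. The one point demanding care is the order of operations: summing only the two entry-level equations leaves the progression-rate asymmetry visible inside $F$, so the cancellation, and with it the clean formula for $s$, emerges only after one also sums over the managerial level. The result is therefore a statement about the organization as a whole rather than about either level in isolation, and Assumption~\eqref{ass_fixed} is precisely what is needed to evaluate the aggregate retirement loss as $\hatr N$.
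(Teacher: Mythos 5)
Your proposal is correct and follows essentially the same route as the paper: sum all four ODEs so the promotion flux cancels, identify the aggregate retirement loss as $\hatr N$, and combine with $\mathd N/\mathd t=\lambda N$ to solve $2s-\hatr N=\lambda N$ for $s$. The only cosmetic difference is that the paper defines $\hatr$ implicitly via $\hatr N=r_1(P_1+Q_1)+r_2(P_2+Q_2)$ and defers the explicit formula $\hatr=r_1(1-\varphi)+r_2\varphi$ to the next theorem, whereas you invoke Assumption~\eqref{ass_fixed} immediately to obtain it.
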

The crude retirement rate $\hatr$ is known -- it is simply $r=1/T$, where $T$ is the average length of service.  Similarly, the retirement rate $r_2$ is known -- if $T_*$ is the average time between recruitment and promotion, then $r_2=(T-T_*)^{-1}$.  As such, we have the following theorem:
\begin{theorem}
The retirement rate $r_1$ is given by
\begin{equation}
r_1=(\hatr - r_2\varphi)/(1-\varphi).
\label{eq:r1}
\end{equation}
\end{theorem}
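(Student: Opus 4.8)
The plan is to regard the statement as an immediate algebraic consequence of the preceding theorem, which furnishes the defining relation for the crude retirement rate $\hatr=r_1(1-\varphi)+r_2\varphi$. The strategy is to treat this relation as a single linear equation in the one unknown $r_1$, with $\hatr$, $r_2$, and $\varphi$ all regarded as prescribed data of the problem, and simply to solve it.

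First I would recall from the surrounding discussion that $\hatr$, $r_2$, and $\varphi$ are each independently known: $\hatr=1/T$ is fixed by the average length of service $T$, the managerial retirement rate satisfies $r_2=(T-T_*)^{-1}$ in terms of the mean time $T_*$ between recruitment and promotion, and $\varphi$ is the fixed managerial fraction from Assumption~\eqref{ass_fixed}. This is the conceptual crux of the statement: it certifies that $r_1$ is not a free parameter but is pinned down by measurable quantities, so that the formula genuinely determines $r_1$ rather than merely restating a constraint among unknowns.

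The algebraic step itself is then routine. Subtracting $r_2\varphi$ from both sides of the definition of $\hatr$ isolates the term containing $r_1$, giving $r_1(1-\varphi)=\hatr-r_2\varphi$, and dividing through by $1-\varphi$ yields \eqref{eq:r1}. The only point requiring justification is that this division is legitimate, which follows at once because Assumption~\eqref{ass_fixed} guarantees $0<\varphi<1$ and hence $1-\varphi>0$. I would therefore expect no genuine obstacle here: once the defining relation for $\hatr$ is in hand, the result is a one-line rearrangement, and the substantive work lies entirely in the prior theorem that establishes that relation.
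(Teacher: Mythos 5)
Your proposal is correct and matches the paper's own argument: the appendix proof likewise starts from the defining relation for the crude retirement rate, writes it as $\hatr=r_1(1-\varphi)+r_2\varphi$ using Assumption~8, and rearranges to isolate $r_1$. Your added remark that $0<\varphi<1$ legitimizes the division is a small but welcome extra that the paper leaves implicit.
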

We next define a crude progression rate $\muhat$, such that $\muhat (P_1+Q_1)= \mu P_1+ \mu' Q_1$.
We have:
\begin{theorem}
The crude progression rate $\muhat$ is given by
\begin{equation}
\muhat=(r_2+\lambda)\varphi/(1-\varphi).
\label{eq:r1}
\end{equation}
\end{theorem}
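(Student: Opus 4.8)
The plan is to turn Assumption~\eqref{ass_fixed} from a static constraint into a dynamical one. First I would introduce the aggregate level-populations $L_1=P_1+Q_1$ and $L_2=P_2+Q_2$, so that Assumption~\eqref{ass_fixed} reads $L_2=\varphi N$ and $L_1=(1-\varphi)N$. Adding the managerial-level equation of~\eqref{eq:model1} to that of~\eqref{eq:model2} then produces an evolution equation for the aggregate managerial population,
\begin{equation}
\frac{\mathd L_2}{\mathd t}=\mu P_1+\mu' Q_1-r_2\left(P_2+Q_2\right).
\end{equation}
Invoking the defining relation $\muhat\left(P_1+Q_1\right)=\mu P_1+\mu' Q_1$ collapses the two promotion terms into a single term $\muhat L_1$, so the aggregate equation becomes $\mathd L_2/\mathd t=\muhat L_1-r_2 L_2$. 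Note that the source function $s$ never enters here, since it appears only in the entry-level equations; consequently I expect to need neither the preceding two theorems nor the explicit form of $s$.

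The second step is to differentiate the constraint. Because $L_2=\varphi N$ with $\varphi$ constant, and because $N$ obeys $\mathd N/\mathd t=\lambda N$ by Assumption~\eqref{ass_N}, I have $\mathd L_2/\mathd t=\varphi\lambda N$. Substituting this expression, together with $L_1=(1-\varphi)N$ and $L_2=\varphi N$, into the aggregate equation from the first step yields
\begin{equation}
\varphi\lambda N=\muhat(1-\varphi)N-r_2\varphi N.
\end{equation}
Dividing through by $N(t)>0$ eliminates all time dependence and leaves the purely algebraic relation $\varphi\lambda=\muhat(1-\varphi)-r_2\varphi$, which rearranges at once to $\muhat=(r_2+\lambda)\varphi/(1-\varphi)$, as claimed.

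The only place where any care is required — and what I regard as the conceptual heart of the argument rather than a genuine obstacle — is the recognition that Assumption~\eqref{ass_fixed} must hold for \emph{all} $t>0$, not merely at $t=0$. It is precisely this persistence requirement, fed through the differentiated constraint, that closes the aggregate system and pins $\muhat$ to one admissible value instead of leaving it a free parameter. The algebra itself is routine; the substance of the proof lies entirely in assembling the aggregated managerial-level balance and observing that the fixed managerial fraction, once differentiated, forces the result.
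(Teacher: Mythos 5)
Your proposal is correct and follows essentially the same route as the paper's own proof: both differentiate the fixed-fraction constraint $(P_2+Q_2)/N=\varphi$ to obtain $\mathd(P_2+Q_2)/\mathd t=\lambda(P_2+Q_2)$, sum the managerial-level equations, invoke the defining relation $\muhat(P_1+Q_1)=\mu P_1+\mu' Q_1$, and divide through using $(P_2+Q_2)/(P_1+Q_1)=\varphi/(1-\varphi)$. Your explicit introduction of the aggregates $L_1,L_2$ and your observation that the source term and the preceding theorems are not needed are merely presentational differences.
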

With the value of $\muhat$ prescribed, it is possible to write the independent progression rates $\mu$ and $\mu'$ in a more succinct form.  As such, we write $\mu'=k\mu$, where $k$ is a non-negative constant.  If $k<1$ there is a preference for the $P$-group in the promotion system; otherwise, if $k>1$ there is a preference for the $Q$-group.  Given the definition of $\muhat$, we have $\mu=\muhat (P_1+Q_1)/(P_1+kQ_1)$, or $\mu=\muhat \Nfunc(P_1,Q_1)$, where 
 $\Nfunc=(P_1+Q_1)/(P_1+kQ_1)$ is homogeneous in each of its variables, i.e. $\Nfunc(xP_1,xQ_1)=\Nfunc(P_1,Q_1)$, for all $x\neq 0$.

In an analysis of the structure of the organization in terms of the $P$- and $Q$-groups, what is of interest is not the headcounts $P_1,\dots, Q_2$ but rather the proportion of individuals at a given career level.  As such, we introduce the scaled variables $p_1=P_1/N$, $p_2=P_2/N$, $q_1=Q_1/N$, and $q_2=Q_2/N$.
We have the following theorem:
\begin{theorem}
Given~\eqref{eq:model1}--\eqref{eq:model2}, the scaled variables  satisfy the following ODEs:
\begin{subequations}
\begin{eqnarray}
\frac{\mathd p_1}{\mathd t}&=&s_0-(r_1+\lambda) p_1 - \muhat \Nfunc(p_1,q_1)p_1,\\
\frac{\mathd p_2}{\mathd t}&=&\muhat \Nfunc(p_1,q_1)p_1-(r_2+\lambda)p_2,\\
\frac{\mathd q_1}{\mathd t}&=&s_0-(r_1+\lambda) q_1 - \muhat k \Nfunc(p_1,q_1)q_1,\\
\frac{\mathd q_2}{\mathd t}&=&\muhat k \Nfunc(p_1,q_1)q_1-(r_2+\lambda)q_2,
\end{eqnarray}%
\label{eq:syst2}%
\end{subequations}%
where $s_0=s(t)/N(t)=(\lambda+\hatr)/2$.
\end{theorem}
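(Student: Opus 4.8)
The plan is to change variables from the raw headcounts to their population fractions by applying the quotient rule, and then to exploit the degree-zero homogeneity of $\Nfunc$ to close the resulting system. First I would differentiate $p_1 = P_1/N$, using $\mathd N/\mathd t = \lambda N$ from Assumption~\ref{ass_N}, to obtain
\begin{equation}
\frac{\mathd p_1}{\mathd t} = \frac{1}{N}\frac{\mathd P_1}{\mathd t} - \frac{P_1}{N^2}\frac{\mathd N}{\mathd t} = \frac{1}{N}\frac{\mathd P_1}{\mathd t} - \lambda p_1,
\end{equation}
and identically for $p_2$, $q_1$, and $q_2$. The effect is that each scaled equation acquires an additional $-\lambda$ contribution to its linear rate coefficient, accounting for the dilution of each fraction by the exponential growth of the total headcount $N$.

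Next I would substitute the original model equations~\eqref{eq:model1}--\eqref{eq:model2} for the headcount derivatives and divide through by $N$ term by term. The source term becomes $s/N = s_0$ (using the prescribed form~\eqref{eq:source} together with $N = N_0\mathe^{\lambda t}$, which gives $s_0 = (\lambda+\hatr)/2$), the retirement terms become $r_1 p_1$, $r_2 p_2$, and so on, while the promotion coupling $\mu P_1$ turns into $\mu p_1$, where $\mu = \muhat\,\Nfunc(P_1,Q_1)$.

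The one step requiring more than bookkeeping is to re-express the promotion coefficient in the scaled variables. Since $\Nfunc$ is homogeneous of degree zero, choosing the scale factor $x = 1/N$ yields $\Nfunc(P_1,Q_1) = \Nfunc(p_1,q_1)$, so that the nonlinear promotion term closes with no residual explicit dependence on $N$. This is the crux of the argument and the main (albeit mild) obstacle: without the homogeneity of $\Nfunc$ the scaled system would fail to be autonomous in the fractions alone, and $N$ would reappear. With this identity in hand, $\mu p_1 = \muhat\,\Nfunc(p_1,q_1)\,p_1$, and the relation $\mu' = k\mu$ supplies the extra factor $\muhat k\,\Nfunc(p_1,q_1)$ in the $Q$-group equations. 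Collecting the four equations then reproduces~\eqref{eq:syst2} exactly, completing the proof.
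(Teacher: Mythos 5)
Your proposal is correct and follows exactly the route the paper intends: the paper states only that ``the proof follows by direct computation; the homogeneity of $\Nfunc$ is a key part of the computation,'' and your argument --- quotient rule producing the extra $-\lambda$ terms, division by $N$ giving $s_0=(\lambda+\hatr)/2$, and degree-zero homogeneity with scale factor $1/N$ to close the promotion term in the fractions alone --- is precisely that computation, carried out in full.
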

The proof follows by direct computation; the homogeneity of the 
 $\Nfunc$ is a key part of the computation.
In the remainder of the paper we work with the scaled model in~\eqref{eq:syst2}. 

\section{Steady-state solution}
\label{sec:steady}

We examine steady-state solutions of~\eqref{eq:syst2} obtained by setting the time derivatives on the left-hand side equal to zero.   The results give further insight into the structure of the model as well as giving the motivation for introducing the glass-ceiling index.
As such, we obtain
\begin{subequations}
\begin{eqnarray}
s_0=(r_1+\lambda)p_1+\muhat \Nfunc p_1,&\phantom{a}&\muhat \Nfunc p_1=(r_2+\lambda)p_2,\\
s_0=(r_1+\lambda)q_1+k\muhat \Nfunc q_1,&\phantom{a}&\muhat k\Nfunc q_1=(r_2+\lambda)q_2.
\end{eqnarray}
\end{subequations}
These are algebraic equations which can be solved to give explicit solutions.
%

The case $k=0$ is special, and corresponds to the steady state $q_2=0$.  The full solution in this case is discussed in the {\textbf{Appendix}}.  The solution comes with the requirement
\begin{equation}
\varphi\leq (\hatr+\lambda-s_0)/(r_2+\lambda).
\label{eq:varphi_cst}
\end{equation}
As $k=0$ is an extreme case,~\eqref{eq:varphi_cst} can be thought of as a sufficient condition such that $p_1\geq 0$ at the steady state; indeed, this can be assumed to be a general condition to avoid a population crash where $p_1\rightarrow 0$ in finite time.  As such, in the remainder of this work, we assume that~\eqref{eq:varphi_cst} holds. 

A second special case is $k=1$, which corresponds to symmetric $p$- and $q$- populations.  In this case, it is readily seen that $p_2=q_2=\varphi/2$ and $p_1=q_2=(1-\varphi)/2$ in the steady state.  Otherwise, a general solution pertains; in the {\textbf{Appendix}} is is shown by straightforward calculation that the general solution is parametrized as follows (we use an asterisk to denote the steady state):
%
\begin{equation*}
(p_{1*},p_{2*},q_{1*},q_{2*})=\\
\left(\frac{1-\varphi}{1+(x/k)},
\frac{\varphi}{1+x},
\frac{x}{k}\frac{1-\varphi}{1+(x/k)},
\frac{x\varphi}{1+x}\right),
\label{eq:pstar}
\end{equation*}
where $x$ is the solution (positive branch) of a quadratic equation:
\begin{equation}
x^2+(k-1)x\left[(\varphi/s_0)(r_2+\lambda)-1\right]-k=0,
\label{eq:x4}
\end{equation}

We introduce the \textit{glass-ceiling index} $g(t)$ to display these results graphically:
\begin{multline}
\myg(t)=\frac{\text{Proportion of organization made up by }Q\text{-group}}{\text{Proportion of managerial level made up by }Q\text{-group}}\\
=\frac{ (Q_1+Q_2)/N}{ Q_2/(Q_2+P_2)}
=\varphi\left(1+\frac{q_1}{q_2}\right).
\label{eq:glassceiling}
\end{multline}
Correspondingly, we introduce $\mygstar=\lim_{t\rightarrow \infty}\myg(t)$.  Hence, from Equation~\eqref{eq:pstar}, we have
\begin{equation}
\mygstar=\varphi\left(1+\frac{1-\varphi}{\varphi}\frac{1+x}{k+x}\right).
\label{eq:glassceiling1}
\end{equation}
A sample curve $\mygstar(k)$ is shown in Figure~\ref{fig:glass_ceiling}.  The value $\mygstar=1$ corresponds to $k=1$, hence complete symmetry between the $p$- and $q$-groups.  
\begin{figure}[htb]
	\centering
		\includegraphics[width=0.8\linewidth]{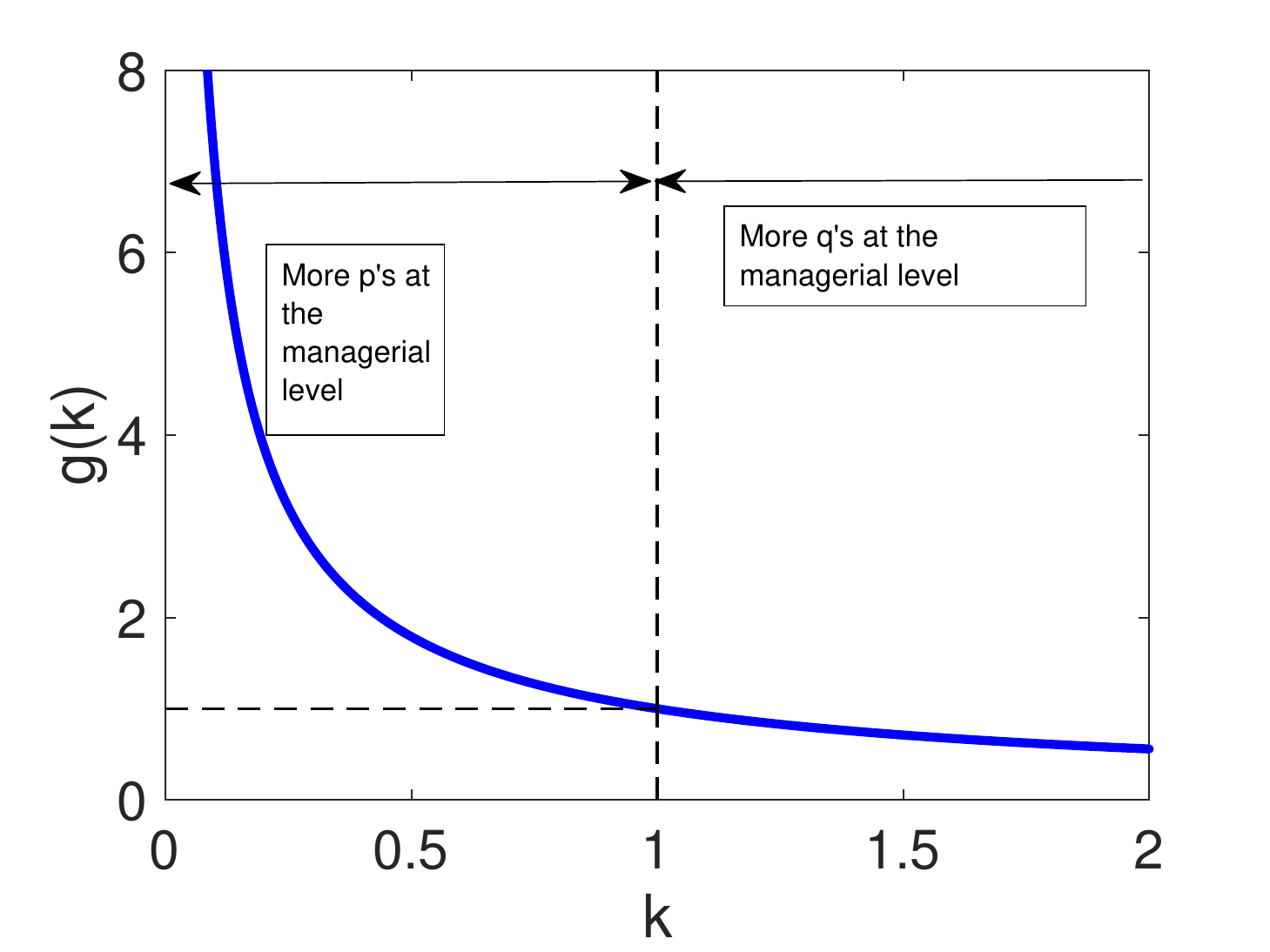}
		\caption{Glass-ceiling index $\mygstar$ as a function of the asymmetry parameter $k$, as given by~\eqref{eq:glassceiling1}.  
		Model parameters -- for illustration purposes only: $\lambda=0$ (steady-state headcount), $\hatr=(1/35)\,\text{[Years]}^{-1}$, $r_2=(1/15)\,\text{[Years]}^{-1}$,
		$\varphi=0.245$}
	\label{fig:glass_ceiling}
\end{figure}

\section{Dynamics}

\begin{figure}[htb]
\centering
\includegraphics[width=0.95\linewidth]{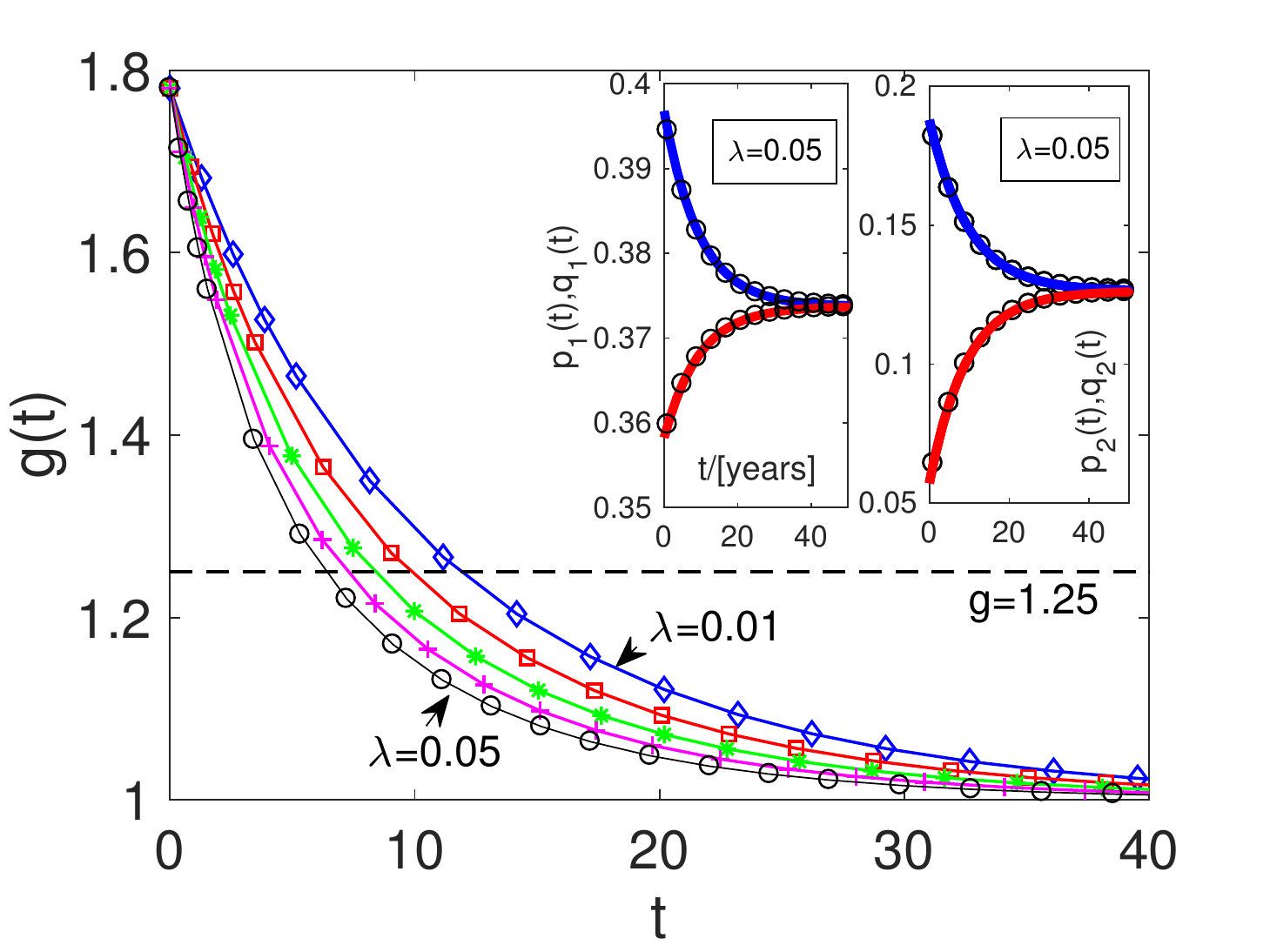}\\
\caption{The glass-ceiling index $\myg(t)$ as a function of time for various values of the headcount growth rate $\lambda$, starting with $\lambda=0.01$ and moving monotonically through the values $\lambda=0.02,0.03,0.04$, to $\lambda=0.05$.  The initial conditions can be read from the graph in the inset.  Other parameters common to all curves: $\hatr=1/35\,\text{[Years]}^{-1}$, $r_2=1/15\,\text{[Years]}^{-1}$.   }
\label{fig:test2}
\end{figure}

When $k=1$ the $p$- and $q$-populations are symmetric and the populations relax over time to the steady state $p_2=q_2=\varphi/2$ and $p_1=q_1=(1-\varphi)/2$.  In this scenario, we have $\Psi=1$, and the ODEs for the $p$- and $q$-groups become decoupled;  indeed, the model ODEs reduce to a linear set of ODEs which can be solved via the standard integrating-factor technique.  The details of these solutions are provided in the {\textbf{Appendix}}; what matters is the appearance of the exponential factors $\mathe^{-(r_1+\muhat+\lambda)t}$ and $\mathe^{-(r_2+\lambda)t}$ in the solutions, which multiply the initial conditions.  As such, the effect of the initial conditions is attenuated over time, the attenuation is governed by the `decay times' $\tau_1=(r_1+\muhat+\lambda)^{-1}$ and $\tau_2=(r_2+\lambda)^{-1}$.

In universities, positive headcount growth  is generally seen as positive (in other contexts it might be a sign of organizational inefficiency).  Hence, we assume $\lambda\geq 0$ in the remainder of this article.  Attenuation is only as fast as the longest timescale of $\tau_1$ and $\tau_2$.  As such, for $\lambda\geq 0$, we identify an overall attenuation rate $\tau_*=\max(\tau_1,\tau_2)=(r_2+\lambda)^{-1}$.  From this expression, 
 it is clear that both the retirement rate $r_2$ and a headcount growth rate $\lambda\geq 0$ act together to attenuate the initial conditions and to hasten the onset of the steady state.  As such, the onset of the steady state $p_2/p_1=q_2/q_1=\varphi/(1-\varphi)$ can be hastened by increasing either $r_2$ or $\lambda$.

We use numerical solutions of \eqref{eq:syst2} to demonstrate the effect of $\lambda$ on the glass-ceiling index -- this is shown in Figure~\ref{fig:test2}.    An increase in the headcount growth rate $\lambda$ accelerates the convergence of the glass-ceiling index to its long-term steady value.  For instance, if an organization deems it desirable to have 40\% female representation at the management level and 50\% female representation in the organization as a whole, this corresponds to $g=1.25$.  For the parameters in Figure~\ref{fig:test2}, this can be achieved in 12 years for $\lambda=0.01\,\text{[Years]}^{-1}$, and in only 6 years for $\lambda=0.05\,\text{[Years]}^{-1}$. 
%

\section{Decomposition of the progression rate}

We decompose the progression rates, thereby obtaining a more fundamental understanding of the asymmetry $\mu\neq \mu'$.  The decomposition for the $P$-group is given by
\begin{multline}
\mu=\left[\begin{array}{c}\text{Number of }P\text{-individuals moving to the managerial level,}\\
\text{as a proportion of all }P\text{-individuals at the entry level}\end{array}\right]\Big/\left[\text{Year}\right]\\
=\left[\begin{array}{c}\text{Number of }P\text{-individuals under consideration for promotion,}\\
\text{as a proportion of all }P\text{-individuals at the entry level}\end{array}\right]\\
\times \left[\begin{array}{c}\text{Success rate of the} 
P\text{-individuals in the promotion system}\end{array}\right]
\Big/\left[\text{Year}\right]
=\nu\times \sigma,
\end{multline}
where $\nu$ is the first factor in the product $\mu=\nu\times\sigma$ and $\sigma$ is the second factor -- as such $\nu$ has dimensions of 
$\left[\text{Year}\right]^{-1}$ and $\sigma$ is a a pure percentage, with $\sigma \in [0,1]\times 100\%$.  We similarly write $\mu'=\nu'\times\sigma'$ (in an obvious notation).  


Accordingly, we classify the promotion system according to the parameters $\nu$, $\nu'$, $\sigma$, and $\sigma'$ as follows:
\begin{itemize}[noitemsep]
\item Supply-side bias: $\sigma=\sigma'$, $\nu\neq \nu'$.
\item In-competition bias: $\nu=\nu'$, $\sigma\neq \sigma'$.
\item Multiple biases: $\nu\neq \nu'$ and $\sigma\neq \sigma'$.
\item Symmetry: $\nu=\nu'$, $\sigma=\sigma'$.
\end{itemize}
\subsection*{The Cascade Model}

In certain contexts~\citep{cascade2018}, the so-called cascade model of promotions has been implemented to bring about equality between the $P$-group and the $Q$-group at the managerial level.  The cascade model stipulates that the proportion of $P$s and 
$Q$s to be recruited or promoted to a certain level is based on the proportion of each at the career level directly below.   This requirement is equivalent to enforcing $k=1$, in the context of our model:
\begin{theorem}
\label{thm:cascade}
The cascade model requires that $k=1$.
\end{theorem}
The proof is straightforward and is provided in the {\textbf{Appendix}}.
From the classification of biases, it can be seen that $k=1$ does not exclude bias: the cascade model is free from bias provided only supply-side and in-competition biases are both eliminated.  As such, we have the following corollary to Theorem~\ref{thm:cascade}
\begin{corollary}
If the promotion system has a supply-side bias, then implementation of the cascade model requires the introduction of a compensatory in-competition bias.
\end{corollary}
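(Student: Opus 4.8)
The plan is to reduce the cascade requirement to a single algebraic constraint on the decomposed progression rates and then show that this constraint is incompatible with retaining the status quo in the success rates. First I would invoke Theorem~\ref{thm:cascade}, which states that the cascade model forces $k=1$; combined with the defining relation $\mu'=k\mu$, this is exactly the statement that the two progression rates coincide, $\mu=\mu'$. This converts the somewhat verbal hypothesis ``implementation of the cascade model'' into a clean equality that I can manipulate.

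Next I would substitute the product decompositions $\mu=\nu\sigma$ and $\mu'=\nu'\sigma'$ into this equality, obtaining the key relation $\nu\sigma=\nu'\sigma'$. The supply-side bias hypothesis supplies the two facts $\sigma=\sigma'$ and $\nu\neq\nu'$. The goal is then to show that these cannot be maintained simultaneously with the cascade constraint, so that an in-competition bias, $\sigma\neq\sigma'$, is forced.

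The core step is a short argument by contradiction. Suppose the cascade model were implemented while still keeping $\sigma=\sigma'$, i.e. without introducing any in-competition bias. Then $\nu\sigma=\nu'\sigma'=\nu'\sigma$ forces $(\nu-\nu')\sigma=0$. Assuming a nontrivial promotion system (that is, $\sigma\neq0$, so that promotions actually take place), this yields $\nu=\nu'$, contradicting the supply-side hypothesis $\nu\neq\nu'$. Hence $\sigma\neq\sigma'$, which is precisely an in-competition bias in the classification given above. To justify the word \emph{compensatory}, I would finally rearrange $\nu\sigma=\nu'\sigma'$ as $\sigma'/\sigma=\nu/\nu'$, which shows that the induced in-competition bias exactly offsets the supply-side bias: if the $P$-group enjoys the larger supply-side parameter ($\nu>\nu'$), then the cascade model requires a compensating success-rate advantage for the $Q$-group ($\sigma'>\sigma$), and conversely.

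I do not anticipate any genuine obstacle here, since the whole argument is a one-line algebraic contradiction once Theorem~\ref{thm:cascade} is in hand. The only point requiring a word of care is the nondegeneracy assumption $\sigma\neq0$ (and, for the compensatory rearrangement, $\nu'\neq0$), which merely excludes the trivial situation in which no one is ever promoted; this is implicit in speaking of a functioning promotion system at all, so I would state it explicitly but treat it as a harmless standing assumption rather than a substantive hypothesis.
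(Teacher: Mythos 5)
Your argument is correct and is essentially the route the paper intends: the paper treats the corollary as immediate from Theorem~\ref{thm:cascade} ($k=1$, hence $\mu=\mu'$, hence $\nu\sigma=\nu'\sigma'$) together with the classification of biases, which is exactly the one-line contradiction you spell out. Your explicit rearrangement $\sigma'/\sigma=\nu/\nu'$ quantifying the word \emph{compensatory} is a small but worthwhile addition beyond what the paper writes down.
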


\section{Application to Gender Balance in European Universities}
\label{sec:application}

We apply our model to academic staff in European universities, with  $Q=\text{women}$ and $P=\text{men}$.
For the purposes comparison between different Member States of the European Union, the  academic career hierarchy is standardized across the European Union and is broken into four levels~\citep{she2015}.  Using common terminology, Level D corresponds to the entry level (PhD / Postdoc), Level C corresponds to Assistant Professor, Level B to Associate Professor, and Level A to Full Professor.   This makes comparison between the European Union data and our own simplified two-level model difficult.  We therefore extend our two-level model to allow for four levels.  We do not solve the resulting equations; instead, it suffices to identify the limiting step for promotions to the highest level and thereby to compare with the official data.  The official data used for these purposes are the summary statistics published by 
the European Commission (the `She Figures')~\citep{she2015}. 

\subsection*{Four-level model}

We introduce a four-equation model, by direct generalization from the two-level~\eqref{eq:model1}--\eqref{eq:model2}.  For the $P$-group, we have:
\begin{subequations}
\begin{eqnarray}
\frac{\mathd P_D}{\mathd t}&=&S_P(t)        -r_D P_D - \mu_D P_D,\\
\frac{\mathd P_C}{\mathd t}&=&\mu_D P_D     -r_C P_C - \mu_C P_C,\\
\frac{\mathd P_B}{\mathd t}&=&\mu_C P_C     -r_B P_B - \mu_B P_B,\\
\frac{\mathd P_A}{\mathd t}&=&\mu_B P_B     -r_A P_A.
\end{eqnarray}
Similarly, for the $Q$-group, we have
\begin{eqnarray}
\frac{\mathd Q_D}{\mathd t}&=&S_Q(t)        -r_D Q_D - \mu_D' Q_D,\\
\frac{\mathd Q_C}{\mathd t}&=&\mu_D'Q_D     -r_C Q_C - \mu_C' Q_C,\\
\frac{\mathd Q_B}{\mathd t}&=&\mu_C' Q_C     -r_B Q_B - \mu_B' Q_B,\\
\frac{\mathd Q_A}{\mathd t}&=&\mu_B' Q_B     -r_A Q_A.
\end{eqnarray}%
\label{eq:fourlevel}%
\end{subequations}%
Here, the rate coefficients $r_A,\dots, r_D$ characterize retirements, these are assumed to be the same for the $P$-group and the $Q$-group.  The coefficients $\mu_A,\dots,\mu_D,\mu'_A,\dots,\mu'_D$ are progression rates, which may be different for the $P$-group and the $Q$-group.  Finally, $S_P(t)$ and $S_Q(t)$ are source functions (possibly distinct); these are chosen such that the total headcount 
\[
N=\sum_{i=A,B,C,D}P_i+\sum_{i=A,B,C,D}Q_i
\]
grows at a set rate, $\mathd N/\mathd t=\lambda N$.
 
In this context, the glass-ceiling index defined in the `She Figures',
\begin{equation}
\mygfour(t)=\frac{Q_A+Q_B+Q_C}{Q_A+Q_B+Q_C+P_A+P_B+P_C}\big/ \frac{Q_A}{Q_A+P_A}
\label{eq:gfour}
\end{equation}
can be applied directly to the four-level model~\eqref{eq:fourlevel}. 
From the definition~\eqref{eq:gfour}, the limiting factor which prevents $\mygfour$ from being equal to one is the progression from level B to level A.  As such, in the remainder of this section, we study the progression rates $\mu_B$ and $\mu'_B$.  We decompose these as 
$\mu_B=\nu\sigma$ and  $\mu'_B=\nu'\sigma'$,
where in this new context $\nu$ denotes the proportion of men at level B entering the competition for promotion to level A in a typical year, and $\sigma$ denotes the success rate of men in the competition for promotion from level B to level A.  The same interpretation holds for $\nu'$ and $\sigma'$ for women.
As such we explore whether supply-side bias, in-competition bias, or a combination of both, is applicable in selected European countries.

We first of all provide some context for our study, using official data of the European Commission for the year 2015~\citep{she2015}.  The average value was $\mygfour=1.75$, averaged over all Member States.  This corresponds to more men than women at Level A.  The glass-ceiling index for selected countries is shown in Table~\ref{tab:she2}.
\begin{table}
\centering
\caption{Academic hierarchy by gender, selected countries.  Taken from the `She Figures 2015' (Table 6.1 therein).}
%
\begin{tabular}{|c|c| c| c |c|c| }
		\hline
		Country                  &  D (F/M) &   C (F/M)    &  B (F/M)  &   A (F/M)  &  $\mygfour$         \\
		\hline
			France           	      & 41/59 &  30/70  & 40/60 & 19/81 & 1.72\\
		  Italy                   & 50/50 &  45/55  & 35/65 & 21/79 & 1.73\\
			Spain                   & 51/49 &  49/51  & 40/60 & 21/79 & 1.80\\
			EU-28                   & 47/53 &  45/55  & 37/63 & 21/79 & 1.75\\
			\hline
		\end{tabular}
\label{tab:she2}
\end{table}
To understand the reasons behind the  different values of $\mygfour$ in Table~\ref{tab:she2}, we discuss our model in the context of econometric literature on gender balance in the hierarchy in European Universities.
We focus our review on France, Italy, and Spain, which currently (2019) organize national central competitions to determine promotion to the highest academic grade.    

\subsection*{Spain}

Zinovyeva and Bagues~\cite{zinovyeva2010does} have looked at the Spanish academic promotion system  in the years 2002-2006.  In these years, the system was based a national examination (\textit{Habilitaci\'on}).  Expert evaluation committees  were convened, by random selection from a national pool of professors.  
%
The authors  used a regression model is used to determine how the probability $p$ of success in the competition depends on the gender of applicants, as well as other applicant attributes (age, academic productivity, etc.).  By random selection, some evaluation committees  have an all-men composition -- this facilitated a `natural experiment' whereby the effect of the committee composition on promotion prospects can be studied.
For academics in the competition,  it was found that $p$  depends on gender -- men had on average a higher probability of success.  The difference is $p=10.2\%$ for men and $p=8.9\%$ for women.  This was found to be statistically significant.  The difference vanished  for  mixed-gender panels.

Using the above findings, and the summary statistics included in the study, we have identified values which correspond to $\sigma$, $\nu$, etc. in our model~\eqref{eq:fourlevel}.  We use the given data that in the period reported, men made up $65\%$ of the headcount at the Associate Professor level, and women made up $35\%$ -- these combined make up the pool of applicants for the Full-Professor level.  At the same time, there were 6,037 individuals who applied for promotion to the top level.  However, many applicants applied more than once -- the average number of applications per candidate in the given time period was just over two.  As such, there were 9,480 applications by men candidates, and 3,744 applications by women candidates, corresponding to a total of 13,224 applications.  From these numbers, the ratio $\nu/\nu'$ may be expressed as $\nu/\nu'=(9480/0.65)/(3744/0.35)\approx 1.36$.
Equally,  $\sigma$ and $\sigma'$ can be identified with the average probability of success for men and women candidates respectively, hence $\sigma/\sigma'=10.2\%/8.9\%\approx 1.14$.
Using the difference operator $\Delta\Phi=\Phi_{P}-\Phi_{Q}$, where $\Phi$ is any one of $\mu$, $\nu$, and $\sigma$, we obtain the following relation via the standard expression for the difference of a product:
\begin{equation}
\frac{\Delta (\sigma\nu)}{\sigma \nu}=\frac{\Delta \nu}{\nu}+\frac{\Delta \sigma}{\sigma}.
\end{equation}
As such, we can identify $\Delta \nu/\nu=(1.36-1)/1.36=0.26$ and $\Delta \sigma/\sigma=(1.14-1)/1.14=0.12$.  Hence, the effect of $\nu\neq \nu'$ contributes twice as strongly as the effect of $\sigma\neq \sigma'$ in determining the asymmetry in the progression rate between men and women.  In other words, the system is asymmetric between men and women mostly because of the supply-side effect, but in-competition bias plays a role also.

\subsection*{France}

Bosquet et al.~\cite{bosquet2014gender} have examined the French academic promotion system  in Economics between 1991-2008, where promotion (in all subjects) is based on a national competition.  
The data consist of academics who applied for promotion, and those who did not.
The authors were therefore able to distinguish between these two groups, and introduced a probability $p(S)$ of success for a candidate, conditional on his/her having applied for promotion.  
The authors constructed a regression model for $p(S)$.  The results showed a differences in $p(S)$ for men and women, but they were  not  statistically significant.  The authors were further able to demonstrate that the main reason for the gender asymmetry at the top of the hierarchy is in the difference in the proportion of men and women who enter the promotion competitions, i.e. a supply-side effect.  The regression analysis can therefore be used to compute the parameters of our own model, $\sigma$, $\nu$, etc.  In summary, the data fro France are imply $\sigma=\sigma'$ but $\nu\neq \nu'$ (full details provided in the {\textbf{Appendix}}).

\subsection*{Italy -- Local Competitions}

De Paola and Scoppa~\cite{de2015gender} have examined the Italian academic promotion system for Associate Professor (Level B) and Full Professor (Level A) positions in the years 2008-2011, when the system was based on institutional-level (local) competitions.  The data for the entire country were collated by the Ministry for Education and made available for research purposes.  The reference focuses on competitions  in Economics and Chemistry.  Competitions for promotions to Associate and Full Professor are grouped together. 
The authors used a regression model to determine how the probability $p$ of success in the competitions depends on gender, as well as other factors.  A statistically significant dependence on gender was found, with $p=8.9\%$ for women and with $p=13\%$ for men.  However, the difference almost vanished in cases where candidates were assessed by mixed-gender panels.
 Based on the given values of $p$, the number of candidates entering the competition, and the composition by gender of the pool of potential candidates for the Full Professor positions, we have computed $\Delta \nu/\nu=0.06$ and $\Delta \sigma/\sigma=0.40$ (full details in the {\textbf{Appendix}}).  Hence, the asymmetry in the progression rates for progression to Level A  is due almost entirely to in-competition bias.  

\subsection*{Italy -- National Competitions}

A reformed, national-level competition (\textit{Abilitazione Scientifica Nazionale}, ASN) for promotion to both Associate Professor and Full Professor has been introduced in Italy, and the results for the years 2012-2014 were studied in a follow-on paper by de Paola et al.~\cite{de2015gender_b}.  Data for the entire set of Italian academics were examined (i.e. data corresponding to academics who do not enter the competitions, as well as those that do).  The authors were therefore able to examine the probability that an individual academic will enter the competition, as well as the probability that an individual academic will succeed in the competition, conditional on having entered.  The authors found that overall, a female academic has a statistically significant lower probability of entering the competition.
Interestingly, the main effect documented in the first study~\citep{de2015gender_b} (different male/female success rates) vanished for the ASN -- the probability that an individual academic will succeed in the competition, conditional on having entered, is the same for men and women (no statistically significant difference).  Hence, in the reformed competition, the asymmetry in the progression rates for progression to Level A  is again due to a supply-side effect.

\section{Discussion and Conclusions}

Summarizing, we have introduced a deterministic model  to describe how  the structure of an organization's hierarchy changes over time.
The model categorizes staff into two groups ($P$-group and $Q$-group) -- and makes a prediction for the long-time balance between the $P$-group and the $Q$-group at the top of the hierarchy. 
The model is broadly applicable in a public-sector context with security of tenure, and most applicable in a university setting where headcount growth is a desirable organizational aim.   The model is generic, however we have applied it in a context where $P=\text{men}$ and $Q=\text{women}$.
%
%
%

We have defined precisely the progression rates which control the long-term steady state of the model.  We have decomposed the progression rates into their component parts -- this pinpoints 
the proximate cause of any gender imbalance in the organizational hierarchy -- supply-side bias and in-competition bias.

We have extended the model to a four-level system and applied it to European University systems.  Using  data from large-scale studies of three  countries, we have estimated the relevant  model parameters (in particular, the components of the progression rates for progression to the highest academic level) for France, Italy, and Spain.  
The analysis reveals that supply-side effects play the main role in the gender imbalance at the top of the academic hierarchy in these countries.

It will be of interest in future work to gather data (progression rates, retirement rates, etc.) for a range of countries, and thereby to make predictions concerning the long-term composition of the academic hierarchy throughout the European Union (and beyond).    As part of any such study, the question of retirement rates should be revisited -- in the countries studied herein these are the same for men and women, although this should be checked carefully for each country under consideration.    Indeed, a quick sensitivity analysis of the underlying differential equations in our model indicates that differential retirement rates will also have a strong effect on the steady-state model solutions.  

Furthermore, it will be of interest to revisit Assumption~\ref{ass_homo}, namely the homogeneity of the populations.  In studies of gender equality, paradoxes of amalgamation often arise, whereby an amalgamated success rate tells a different story from success rates broken down by discipline~\cite{bickel1975sex,walton2016simpson}.  Therefore, in future, the modelling approach developed herein may be applied on a discipline-by-discipline level, eventually building from the bottom up to a comprehensive amalgamated picture of entire universities, and more broadly, entire university systems.

\appendix*

\section{}
This appendix is formatted in numbered sections which mirror the sections in the main paper.  As such, Section~\ref{sec:form} is concerned with the  formulation of the model; here further in-depth discussion concerning the model assumptions is provided.  Section~\ref{sec:theory} contains the proofs of the theorems proposed in the main paper.  Section~\ref{sec:steady} gives the details behind the derivation of the algebraic steady-state solutions.  Section~\ref{sec:dynamics} gives the details behind the derivation of the dynamic solutions in the case where $k=1$.  Section~\ref{sec:decomp} is concerned with the fundamental decomposition of the progression rates, in particular, we prove the theorem that the so-called cascade model implies $k=1$ -- this theorem is only stated (with context) in the main paper.  In Section~\ref{sec:application} we provide details and supporting calculations for the application of our model to real academic promotion systems in Spain, France, and Italy.

\subsection{Formulation}
\label{sec:form}

For completeness, we recall the model assumptions here:
\begin{enumerate}[noitemsep]
\item \label{ass_time} Time is measured in years.
\item \label{ass_homo} Apart from the binarization of the population into the specified groups and career levels, the population is otherwise homogeneous.
\item \label{ass_N} The total organizational headcount grows according to 
\begin{equation}
\frac{\mathd N}{\mathd t}=\lambda N,
\label{eq:app:dNdt}
\end{equation}
where $\lambda$ is a constant with  dimensions of $[\text{Number of indivudals}]^{-1}[\text{Year}]^{-1}$.
\item \label{ass_resign} There are no resignations, deaths in service, redundancies, or dismissals -- employees leave the organization only through retirements.
\item \label{ass_equal} The organization recruits members of both groups at equal rates.  Recruitment is  only at the entry level; access to the managerial level is by promotion only.    Once at the managerial level, employees cannot return to the entry level -- there is no `demotion' of managers.
\item \label{ass_retire} Employees of the $P$- and $Q$-groups retire at equal rates; employees at the different levels in the hierarchy retire at different rates.  
\item \label{ass_rectirex} There is an overall `crude' retirement rate set by the average length of service.
\item \label{ass_fixed} The total number of employees at the managerial level is constrained, with $(P_2+Q_2)/N=\varphi$, where $0<\varphi<1$ is a constant.  Correspondingly, $(P_1+Q_1)/N=1-\varphi$. 
\end{enumerate}
The model represents a first approximation to a university or other public research institution -- these organizations tend to consider growth in headcount to be desirable as (other things being equal) it corresponds to a reduced staff-student ratio and an increase in the inputs required for academic and scientific research.  Elsewhere (in both the private and public sectors), growth in headcount may well be a sign of organizational inefficiency.  This is the justification for emphasizing assumption~\eqref{ass_N}.  Equally, the model assumes low staff turnover, which is typical of universities and public research institutions -- hence assumption~\eqref{ass_resign}.
We emphasize further that the model assumption~\eqref{ass_equal} means that members of the $P$-group and the $Q$-group are recruited into the organization at the entry level in equal numbers.
As such, the model admits an asymmetry between the $P$- and $Q$-groups only by allowing for different progression rates from the entry level to the managerial level for both groups.

It will be helpful to recall the model equations here in a self-contained fashion, for ease of reference in what follows.  We also take the opportunity further to connect the assumptiosn to the model equations.  As such, we recall:
\begin{subequations}
\begin{eqnarray}
\frac{\mathd P_1}{\mathd t}&=&s - r_1 P_1 - \mu P_1,\\
\frac{\mathd P_2}{\mathd t}&=& \mu P_1-r_2 P_2.
\end{eqnarray}%
\label{eq:app:model1}%
\end{subequations}%
The notation is identical to that in the main paper: $s$ is the source function governing the rate at which individuals of the $P$-group are recruited into the organization; this quantity has dimensions of 
$[\text{Number of individuals}][\text{Year}]^{-1}$.  The source function $s$ depends on time.  The other coefficients in Equation~\eqref{eq:app:model1} have the interpretation of rate coefficients, possibly time-dependent.  In particular,
\begin{itemize}
\item $r_1$ is the rate at which members of the $P$-group at the entry level retire.  
\item $\mu$ is the rate at which members of the $P$-group at the entry level are promoted to the managerial level.  
\end{itemize}
These quantities both have dimensions of 
%
%
$[\text{Percentage}][\text{Year}]^{-1}$. 
As such, $r_1$ can be interpreted as the proportion of a $P_1$-individuals who retire per year (`retirement rate'), and $\mu$ can be interpreted as the proportion of all $P_1$-individuals who are promoted to the managerial level, per year (`progression rate').  The fact that $r_1\neq 0$ means that some of the members of the $P$-group at entry level are never promoted to managerial level and spend their whole length of service at the entry level.  Finally, $r_2$ can be interpreted as the proportion of $P_2$-individuals who retire per year (this is again referred to as a `retirement rate').  In general, the rates $r_1$ and $r_2$ will be different.  
It should be noted that Equations~\eqref{eq:app:model1} are valid for $t>0$; at $t=0$, initial conditions apply:
\begin{equation}
P_1(t=0)=P_{10},\qquad P_2(t=0)=P_{20},
\label{eq:app:ic1}
\end{equation}
where $P_{10}$ and $P_{20}$ are constants corresponding to the initial values of the different $P$-populations.
The equations  for the $Q$-group are very similar to those already written down for the $P$-group:
\begin{subequations}
\begin{eqnarray}
\frac{\mathd Q_1}{\mathd t}&=&s - r_1 Q_1 - \mu' Q_1,\\
\frac{\mathd Q_2}{\mathd t}&=& \mu' Q_1-r_2 Q_2.
\end{eqnarray}%
\label{eq:app:model2}%
\end{subequations}%
The source function and the retirement rates are the same in Equations~\eqref{eq:app:model1} and~\eqref{eq:app:model2}, as per Assumptions~\eqref{ass_equal}--\eqref{ass_retire}.  The progression rate in Equation~\eqref{eq:app:model2} is $\mu'$.  Finally, we recall:
\begin{equation}
P_1(t)+P_2(t)+Q_1(t)+Q_2(t)=N(t).
\label{eq:app:N}
\end{equation}

\subsection{Theoretical Analysis}
\label{sec:theory}

In this section we prove the various theorems stated in the main part of the paper, starting with
\begin{theorem}
The source term is not arbitrary; it is given by
\begin{equation}
s(t)=\tfrac{1}{2}\left(\lambda+\hatr\right)N_0\mathe^{\lambda t},
\label{eq:app:source}
\end{equation}
where $\hatr$ is the crude retirement rate.
\end{theorem}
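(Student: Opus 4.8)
The plan is to pin down $s$ by exploiting the headcount-growth law (Assumption~\ref{ass_N}) together with the managerial-level constraint (Assumption~\ref{ass_fixed}), using a global balance obtained by summing the four population equations.

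First I would add the two equations in~\eqref{eq:app:model1} to the two in~\eqref{eq:app:model2}. Because recruitment feeds only the entry level and promotion merely transfers an individual from the entry level to the managerial level, the promotion terms $+\mu P_1$ and $-\mu P_1$ cancel, as do $+\mu' Q_1$ and $-\mu' Q_1$. Using $N=P_1+P_2+Q_1+Q_2$ from~\eqref{eq:app:N}, the sum collapses to
\begin{equation}
\frac{\mathd N}{\mathd t}=2s-r_1(P_1+Q_1)-r_2(P_2+Q_2).
\end{equation}
This cancellation --- reflecting the fact that promotions are purely internal transfers that leave the total headcount unchanged --- is the one step doing the real work; everything after it is substitution.

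Next I would invoke Assumption~\ref{ass_fixed}, which fixes $P_1+Q_1=(1-\varphi)N$ and $P_2+Q_2=\varphi N$. Substituting these gives $\mathd N/\mathd t=2s-\hatr N$, where the grouped retirement coefficient $\hatr=r_1(1-\varphi)+r_2\varphi$ is precisely the crude retirement rate. Equating with the prescribed growth law $\mathd N/\mathd t=\lambda N$ yields $2s=(\lambda+\hatr)N$, that is $s=\tfrac{1}{2}(\lambda+\hatr)N$. Finally, integrating Assumption~\ref{ass_N} gives $N(t)=N_0\mathe^{\lambda t}$, and substituting produces the claimed formula~\eqref{eq:app:source}.

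The only point requiring care is consistency: the constraint $(P_2+Q_2)/N=\varphi$ must be treated as an identity holding for all $t>0$, not merely at $t=0$, so that it may be substituted into the differentiated balance. Granting this, the argument is essentially a one-line computation, and its conceptual content is simply that the recruitment rate is determined entirely by growth plus retirement, independently of the (possibly asymmetric) promotion dynamics $\mu\neq\mu'$.
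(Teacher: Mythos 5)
Your proposal is correct and follows essentially the same route as the paper: sum the four population equations so the promotion terms cancel, identify the retirement terms with $\hatr N$ (which, via Assumption~\ref{ass_fixed}, is exactly $\hatr=r_1(1-\varphi)+r_2\varphi$ as in the theorem statement), and equate with $\mathd N/\mathd t=\lambda N$ to solve for $s$. Your closing remark about treating the constraint $(P_2+Q_2)/N=\varphi$ as an identity for all $t$ is a fair point of care, though the paper sidesteps it in this particular proof by simply defining $\hatr$ through $\hatr N=r_1(P_1+Q_1)+r_2(P_2+Q_2)$ and deferring the reduction to $r_1(1-\varphi)+r_2\varphi$ to the subsequent theorem.
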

\begin{proof}
We begin by noting that Equations~\eqref{eq:app:model1} and~\eqref{eq:app:model2} may be combined to give
\begin{eqnarray*}
2s-r_1 (P_1+Q_1)-r_2 (P_2+Q_2)&=&\frac{\mathd}{\mathd t}\left(P_1+P_2+Q_1+Q_2\right),\\
                              &\stackrel{\text{Eq.~\eqref{eq:app:N}}}{=}&\frac{\mathd N}{\mathd t},\\
															&\stackrel{\text{Eq.~\eqref{eq:app:dNdt}}}{=}&\lambda N.
\end{eqnarray*}
We therefore have $N=N_0\mathe^{\lambda t}$.    We further identify the crude retirement rate via the definition 
\begin{equation}
\hatr N=r_1 (P_1+Q_1)+r_2 (P_2+Q_2).
\label{eq:app:hatr_def}
\end{equation}
Combining the above equations, we obtain
\begin{equation}
2s-\hatr N=\lambda N,
\label{eq:app:hatr_eq}
\end{equation}
from which the result follows, using $N=N_0\mathe^{\lambda t}$. \myqed
\end{proof}
We also recall:
\begin{theorem}
The retirement rate $r_1$  is given by
\begin{equation}
r_1=\frac{\hatr - r_2\varphi}{1-\varphi}.
\label{eq:app:r1}
\end{equation}
\end{theorem}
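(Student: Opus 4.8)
The plan is to derive this relation directly from the definition of the crude retirement rate together with the managerial-fraction constraint of Assumption~\ref{ass_fixed}, without solving any differential equations. The single ingredient beyond elementary algebra is the observation that Assumption~\ref{ass_fixed} fixes the level-by-level population fractions at the constant values $(P_1+Q_1)/N=1-\varphi$ and $(P_2+Q_2)/N=\varphi$.

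First I would recall the defining relation for the crude retirement rate, Equation~\eqref{eq:app:hatr_def}, namely $\hatr N=r_1(P_1+Q_1)+r_2(P_2+Q_2)$. Dividing both sides by the total headcount $N$ rewrites $\hatr$ as a weighted average of the two retirement rates, with weights given precisely by the population fractions at each career level:
\begin{equation*}
\hatr=r_1\,\frac{P_1+Q_1}{N}+r_2\,\frac{P_2+Q_2}{N}.
\end{equation*}

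Next I would substitute the constant fractions supplied by Assumption~\ref{ass_fixed}, which gives $\hatr=r_1(1-\varphi)+r_2\varphi$. Because $0<\varphi<1$, the coefficient $1-\varphi$ is strictly positive, so I can rearrange to $r_1(1-\varphi)=\hatr-r_2\varphi$ and divide through to isolate $r_1$; this yields~\eqref{eq:app:r1} exactly.

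I do not anticipate a genuine technical obstacle: the result is an immediate algebraic rearrangement once the constant-fraction constraint is invoked. The only point requiring any care is conceptual rather than computational — one must note that Assumption~\ref{ass_fixed} is imposed as a time-independent identity (and not merely at the steady state), so that the weights $1-\varphi$ and $\varphi$ appearing in the average are genuinely constant throughout the evolution. Since Assumption~\ref{ass_fixed} stipulates precisely this, the substitution is valid at all times and the derivation is complete.
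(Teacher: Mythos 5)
Your argument is correct and is essentially identical to the paper's own proof: both divide the defining relation $\hatr N=r_1(P_1+Q_1)+r_2(P_2+Q_2)$ by $N$, invoke Assumption~\ref{ass_fixed} to replace the level fractions by $1-\varphi$ and $\varphi$, and rearrange for $r_1$. Your added remark that the constant-fraction constraint holds for all $t$ (not merely at steady state) is a correct and worthwhile clarification, but it does not change the route.
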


\begin{proof}The result~\eqref{eq:app:r1} follows directly from the definition of the crude retirement rate in Equation~\eqref{eq:app:hatr_def}; this equation can be rewritten as
\[
\hatr=r_1\left(\frac{P_1+Q_1}{N}\right)+r_2\left(\frac{P_2+Q_2}{N}\right).
\]
As the ratios $(P_1+Q_1)/N$ and $(P_2+Q_2)/N$ are assumed constant (\textit{cf.} Assumption~\eqref{ass_fixed}), the above can be rewritten as
\[
\hatr=r_1(1-\varphi)+r_2\varphi,
\]
hence $r_1=(\hatr-r_2\varphi)/(1-\varphi)$ and the result is shown. \myqed
\end{proof}

We next revisit the progression rates $\mu$ and $\mu'$.    The starting-point here is to recall that $
(P_2+Q_2)/N=\varphi=\text{Const.}$, hence  $(\mathd/\mathd t)[(P_2+Q_2)/N]=0$, and
\begin{equation}
\frac{\mathd}{\mathd t}(P_2+Q_2)-\frac{1}{N}\frac{\mathd N}{\mathd t}(P_2+Q_2)=0.
\end{equation}
Hence also,
\begin{equation}
\frac{\mathd}{\mathd t}(P_2+Q_2)=\lambda (P_2+Q_2).
\end{equation}
We combine this result with Equations~\eqref{eq:app:model1} and~\eqref{eq:app:model2} to obtain
\begin{equation}
\mu P_1+ \mu' Q_1-r_2 (P_2+Q_2)=\lambda (P_2+Q_2).
\label{eq:app:prog1}
\end{equation}
This motivates the definition of the crude progression rate $\muhat$ in the main paper:
\begin{equation}
\muhat (P_1+Q_1)= \mu P_1+ \mu' Q_1.
\label{eq:app:muhat_def}
\end{equation}
Combining Equations~\eqref{eq:app:prog1} and~\eqref{eq:app:muhat_def} we have the theorem from the main part of the paper:
\begin{theorem}
The crude progression rate $\muhat$ is given by
\begin{equation}
\muhat=(r_2+\lambda)\left(\frac{\varphi}{1-\varphi}\right).
\label{eq:app:r1}
\end{equation}
\end{theorem}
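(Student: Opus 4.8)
The plan is to combine the two ingredients that have just been assembled immediately before the statement: the balance relation~\eqref{eq:app:prog1}, which follows from the model equations together with the constancy of $(P_2+Q_2)/N$, and the defining relation~\eqref{eq:app:muhat_def} for the crude progression rate. Since~\eqref{eq:app:muhat_def} asserts that $\muhat(P_1+Q_1)=\mu P_1+\mu' Q_1$, I would substitute this expression directly into the left-hand side of~\eqref{eq:app:prog1}, thereby eliminating the group-specific rates $\mu$ and $\mu'$ in favour of the single crude rate $\muhat$.

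This substitution yields $\muhat(P_1+Q_1)-r_2(P_2+Q_2)=\lambda(P_2+Q_2)$, which I would rearrange to $\muhat(P_1+Q_1)=(r_2+\lambda)(P_2+Q_2)$ and then solve for $\muhat$ as $(r_2+\lambda)$ multiplied by the headcount ratio $(P_2+Q_2)/(P_1+Q_1)$.

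The final step invokes Assumption~\eqref{ass_fixed}: since $(P_2+Q_2)/N=\varphi$ and $(P_1+Q_1)/N=1-\varphi$, the common factor of $N$ cancels in the ratio, so that $(P_2+Q_2)/(P_1+Q_1)=\varphi/(1-\varphi)$, and the claimed formula~\eqref{eq:app:r1} follows at once.

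I do not anticipate any genuine obstacle here, as the argument is purely algebraic once the preceding relations are in hand. The only point worth flagging is one of consistency rather than difficulty: the right-hand side of the derived identity is manifestly time-independent (both $r_2,\lambda$ and the ratio $\varphi/(1-\varphi)$ are constants), which confirms that $\muhat$ is a genuinely well-defined constant rate, exactly as the terminology presupposes. This constancy is precisely what the fixed-fraction Assumption~\eqref{ass_fixed} buys us, and it is the hypothesis on which the whole computation silently relies.
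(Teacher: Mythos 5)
Your proposal is correct and follows exactly the paper's own argument: substitute the defining relation $\muhat(P_1+Q_1)=\mu P_1+\mu' Q_1$ into the balance equation to get $\muhat(P_1+Q_1)=(r_2+\lambda)(P_2+Q_2)$, divide by $P_1+Q_1$, and use Assumption~\eqref{ass_fixed} to evaluate the ratio as $\varphi/(1-\varphi)$. Your closing remark about the constancy of $\muhat$ resting on the fixed-fraction assumption is a sensible observation consistent with the paper's framework.
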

\begin{proof}
We combine Equations~\eqref{eq:app:prog1} and~\eqref{eq:app:muhat_def} to produce
\[
\muhat (P_1+Q_1)=(\lambda+r_2) (P_2+Q_2).
\]
We divide across by $P_1+Q_1$
\[
\muhat =(\lambda+r_2)\left(\frac{P_2+Q_2}{P_1+Q_1}\right).
\]
The ratio $(P_2+Q_2)/(P_1+Q_1)$ can be rewritten as $[(P_2+Q_2)/N]/[(P_1+Q_1)/N]=\varphi/(1-\varphi)$ -- substituting this expression into the above equation gives the required result~\eqref{eq:app:r1}. \myqed
\end{proof}

\subsection{Steady-state solution}
\label{sec:steady}

We recall there the scaled dynamical equations introduced in Section 2 of the main paper:
\begin{subequations}
\begin{eqnarray}
\frac{\mathd p_1}{\mathd t}&=&s_0-(r_1+\lambda) p_1 - \muhat \Nfunc(p_1,q_1)p_1,\\
\frac{\mathd p_2}{\mathd t}&=&\muhat \Nfunc(p_1,q_1)p_1-(r_2+\lambda)p_2,\\
\frac{\mathd q_1}{\mathd t}&=&s_0-(r_1+\lambda) q_1 - \muhat k \Nfunc(p_1,q_1)q_1,\\
\frac{\mathd q_2}{\mathd t}&=&\muhat k \Nfunc(p_1,q_1)q_1-(r_2+\lambda)q_2,
\end{eqnarray}%
\label{eq:app:syst2}%
\end{subequations}%
where $\Nfunc(p_1,q_2)=(p_1+q_1)/(p_1+kq_1)$.
Steady-state solutions of Equation~\eqref{eq:app:syst2} are introduced in Section 3 of the main paper.  These are obtained by setting $\mathd /\mathd t=0$ in Equation~\eqref{eq:app:syst2}.  The purpose of the present section is to showcase the detailed calculations required to derive these steady-state solutions.

By setting $\mathd/\mathd t=0$ in Equation~\eqref{eq:app:syst2}, we obtain the following algebraic equations:
\begin{subequations}
\begin{eqnarray}
s_0=(r_1+\lambda)p_1+\muhat \Nfunc p_1,&\phantom{aaa}&\muhat \Nfunc p_1=(r_2+\lambda)p_2,\\
s_0=(r_1+\lambda)q_1+k\muhat \Nfunc q_1,&\phantom{aaa}&\muhat k\Nfunc q_1=(r_2+\lambda)q_2.
\end{eqnarray}
Combining these equations gives
\begin{equation}
\frac{s_0-(r_2+\lambda)q_2}{s_0-(r_2+\lambda)p_2}=\frac{q_1}{p_1},\qquad k\frac{q_1}{p_1}=\frac{q_2}{p_2}
\label{eq:app:x1}
\end{equation}
\label{eq:app:ss_def}
\end{subequations}
We now work through the different cases of $k$.

\subsection*{Special Case: $k=0$}

The case $k=0$ is anomalous, and corresponds to the steady state $q_2=0$.  Hence, $p_2=\varphi$.  In this case, Equations~\eqref{eq:app:ss_def} reduce to
\begin{eqnarray*}
s_0&=&(r_1+\lambda)p_1+\muhat \Nfunc p_1,\\
s_0&=&(r_1+\lambda)q_1.
\end{eqnarray*}
The second equation here gives
\[
q_1=\frac{s_0}{r_1+\lambda}.
\]
But $p_1+q_1=1-\varphi$, hence
\[
p_1=1-\varphi-\frac{s_0}{r_1+\lambda}.
\]
Since $p_1$ is a population, we require $p_1\geq 0$, hence
\begin{equation}
\varphi\leq \frac{\hatr+\lambda-s_0}{r_2+\lambda}.
\label{eq:app:varphi_cst}
\end{equation}
As $k=0$ is an extreme case, Equation~\eqref{eq:app:varphi_cst} can be thought of as a sufficient condition such that $p_1\geq 0$ at the steady state; indeed, this can be assumed to be a general condition to avoid a population crash where $p_1\rightarrow 0$ in finite time.  As such, in the remainder of this work, we assume that Equation~\eqref{eq:app:varphi_cst} holds.

\subsection*{Special Case: $k=1$}

When $k=1$ the $p$- and $q$-populations are symmetric, and it can be anticipated that $p_2=q_2=\varphi/2$, and $p_1=q_1=(1-\varphi)/2$ in the steady state.  This can be checked by direct calculation on Equations~\eqref{eq:app:ss_def}.

\subsection*{General Solution}


 We introduce $x=q_2/p_2$, hence $x p_2=q_2$.  But $p_2+q_2=\varphi$, hence $p_2=\varphi/(1+x)$.  Hence, Equation~\eqref{eq:app:ss_def} simplifies:
\begin{equation}
\frac{s_0-(r_2+\lambda)\frac{x\varphi}{1+x}}{s_0-(r_2+\lambda)\frac{\varphi}{1+x}}=\frac{x}{k}.
\label{eq:app:x2}
\end{equation}
Equation~\eqref{eq:app:x2} simplifies to a quadratic equation:
\begin{equation}
x^2+(k-1)x\left[\frac{\varphi(r_2+\lambda)}{s_0}-1\right]-k=0,
\label{eq:app:x3}
\end{equation}
with solution
\begin{equation}
x=-\tfrac{1}{2}(k-1)\left[\frac{\varphi(r_2+\lambda)}{s_0}-1\right]\pm
\tfrac{1}{2}\sqrt{(k-1)^2\left[\frac{\varphi(r_2+\lambda)}{s_0}-1\right]^2+4k}.
\label{eq:app:x4}
\end{equation}
%
Since $x=q_2/p_2$ is a population ratio, we choose the positive sign in front of the square-root in Equation~\eqref{eq:app:x4}, which gives $x\geq 0$.

Once $x$ is determined from Equation~\eqref{eq:app:x3}, we work backwards and find $p_2=p_{2*}=\varphi/(1+x)$, $q_2=q_{2*}=x\varphi/(1+x)$ (we herein denote the steady-state solution with an asterisk).  Furthermore, we have $q_1/p_1=x/k$, hence 
\[
p_1=p_{1*}=\frac{1-\varphi}{1+(x/k)},\qquad q_1=q_{1*}=\frac{x}{k}\frac{1-\varphi}{1+(x/k)}.
\]
As such, there is a well-defined steady state of the model wherein $(p_1,q_1,p_2,q_2)$ have definite constant values, fixed by the parameter $x$.  These are summarized here as follows, and match up with the stated result in Section 3 of the main paper:
\begin{equation}
(p_{1*},p_{2*},q_{1*},q_{2*})=
\left(\frac{1-\varphi}{1+(x/k)},
\frac{\varphi}{1+x},
\frac{x}{k}\frac{1-\varphi}{1+(x/k)},
\frac{x\varphi}{1+x}\right).
\label{eq:app:pstar}
\end{equation}

\subsection{Dynamics}
\label{sec:dynamics}

In this section we compute the exact solutions of the model equations~\eqref{eq:app:syst2} which are available when $k=1$.  The methodology, context, and implications are discussed in Section 4 of the main paper, here, we provide the details.  The starting-point is Equation~\eqref{eq:app:syst2} with $\Psi=1$.  Since both populations are now symmetric, it suffices to focus on the $p$-equations, which now read
\begin{subequations}
\begin{eqnarray}
\frac{\mathd p_1}{\mathd t}&=&s_0-(r_1+\lambda) p_1 - \muhat p_1,\label{eq:app:syst3a}\\
\frac{\mathd p_2}{\mathd t}&=&\muhat p_1-(r_2+\lambda)p_2,\label{eq:app:syst3b}
\end{eqnarray}%
\label{eq:app:syst3}%
\end{subequations}%
Equation~\eqref{eq:app:syst3a} is a standard first-order linear ODE, which can be solved via the integrating-factor technique to give
\begin{equation}
p_1(t)=\left[p_1(0)-\frac{s_0}{r_1+\muhat+\lambda}\right]\mathe^{-(r_1+\muhat+\lambda)t}+\frac{s_0}{r_1+\muhat+\lambda}.
\label{eq:app:sln1}
\end{equation}
We substitute this into Equation~\eqref{eq:app:syst3b}, which now reads
\begin{equation}
\frac{\mathd p_2}{\mathd t}=\muhat\bigg\{
\left[p_1(0)-\frac{s_0}{r_1+\muhat+\lambda}\right]\mathe^{-(r_1+\muhat+\lambda)t}+\frac{s_0}{r_1+\muhat+\lambda}
\bigg\}-(r_2+\lambda)p_2.
\label{eq:app:syst3bb}
\end{equation}
Equation~\eqref{eq:app:syst3bb} is a further first-order linear ODE, which has explicit solution
\begin{multline}
p_2(t)=\bigg\{
p_2(0)-\muhat\frac{s_0}{r_1+\muhat+\lambda}\frac{1}{r_2+\lambda}
-\muhat\left[p_1(0)-\frac{1}{r_1+\muhat+\lambda}\right]\frac{1}{(r_2-r_1)-\muhat}\bigg\}\mathe^{-(r_2+\lambda)t}\\
+\muhat\frac{\left[p_1(0)-\frac{1}{r_1+\muhat+\lambda}\right]\mathe^{-(r_1+\muhat+\lambda)t}}{(r_2-r_1)-\muhat}
+\frac{\muhat}{r_2+\lambda}\frac{s_0}{r_1+\muhat+\lambda}.
\label{eq:app:sln2}
\end{multline}
Furthermore, when $k=1$, the $p$-equations and $q$-equations are copies of each other.  As such, the solution of the $q$-equations in this case can be read off from Equations~\eqref{eq:app:sln1} and~\eqref{eq:app:sln2}, with appropriate changes for the initial conditions.

\subsection{Decomposition of the Progression Rate}
\label{sec:decomp}

We recall the following theorem from Section 5 of the main paper:
\begin{theorem}
\label{thm:cascade}
The cascade model requires that $k=1$.
\end{theorem}
In this section we provide the proof, as follows:
\begin{proof}
The number of $Q$s promoted in a given year, expressed as a fraction of all individuals promoted in a given year is
\[
\frac{\mu' Q_1}{\mu'Q_1+\mu P_1}.
\]
The cascade model requires that this should be equal to the number of $Q$s at the entry level, expressed as a fraction of all individuals at the entry level:
\[
\frac{\mu' Q_1}{\mu'Q_1+\mu P_1}=\frac{Q_1}{Q_1+P_1}.
\]
Both sides of this equation can be inverted to give
\[
1+\frac{\mu}{\mu'}P_1=Q_1+P_1.
\]
We divide both sides by $N$ to obtain $1+(\mu/\mu')p_1=q_1+p_1$.  But $q_1+p_1=1-\varphi$ is fixed, hence $(\mu/\mu')p_1=p_1$.  Assuming $p_1\neq 0$, this gives $\mu/\mu'=1$, hence $k=1$. \myqed
\end{proof}

\subsection{Application to Gender Balance in European Universities}
\label{sec:application}

In this section we provide the detailed supporting calculations which underlie the application of the differential-equation model to real promotion systems in Spain, France, and Italy in Section 6 of the main paper.  In what follows, it is helpful to recall the features of the promotion systems in these countries, as well as providing the details of the calculations.

\subsection*{Spain}

\begin{framed}
\noindent Key points:
\begin{itemize}

\item A pioneering paper by Zinovyeva and Bagues~\cite{zinovyeva2010does} examines the Spanish academic promotion system  in the years 2002-2006.
\item In these years, the system was based a national examination (\textit{Habilitaci\'on}):
\begin{itemize}
\item Expert evaluation committees (7 members) are convenened, by random selection from a national  pool of professors.
\item Evaluation is based on a \textit{resum\'e} and research proposal only (Full Professor -- Level A), or on a
\textit{resum\'e}, research proposal, and lectures (Associate Professor -- Level B). 
\item Candidates who qualify in the \textit{habilitaci\'on} may apply for positions at the university level.
\end{itemize} 
\item A regression model is used to determine how the probability $p$ of success in the competition depends on the gender of applicants, as well as other applicant attributes (age, academic productivity, etc.).
\item By random selection, some evaluation committees  have an all-male composition -- this facilitates a `natural experiment' whereby the effect of the committee composition on promotion prospects can be studied.
\item For academics in the competition,  $p$  depends on gender -- males have on average a higher probability of success.  The difference is $p=10.2\%$ for males and $p=8.9\%$ for females.  This is found to be statistically significant.
\item The difference goes away when candidates are assessed by mixed-gender panels.
\item The regression analysis can be used to estimate the parameters of our own model, $\sigma$, $\nu$, etc.
\end{itemize}
\end{framed}

The model makes $p_{ie}$ a linear function of gender, committee composition (number of female members in the evaluation committee), and the number of positions in a particular subject area destined to be filled.  There is a direct effect of gender, and it is found to be statistically significant at a 5\% level -- the average probability of a male candidate succeeding to the full-professor level is 10.2\% whereas the average probability for a female is 8.9\%, a difference of 1.3 percentage points.  

Using the above findings, and the summary statistics included in the study, it is possible to identify parameter values which correspond to $\sigma$, $\nu$, etc. in the four-level model.  We use the given data that in the period reported, males made up $65\%$ of the headcount at the Associate Professor level, and females made up $35\%$ -- these combined make up the pool of applicants for the Full-Professor level.  At the same time, there were 6,037 individuals who applied for promotion to the top level.  However, many applicants applied more than once -- the average number of applications per candidate in the given time period was just over two.  As such, there were 9,480 applications by male candidates, and 3,744 applications by female candidates, corresponding to a total of 13,224 applications.  From these numbers, the ratio $\nu/\nu'$ may be derived:
\begin{multline}
\frac{\nu}{\nu'}=\frac{\left[\begin{array}{c}\text{Number of Males under consideration for promotion to Level A,}\\
\text{as a proportion of all males at Level B}\end{array}\right]\big/\text{[Year]}}
{\left[\begin{array}{c}\text{Number of Females under consideration for promotion to Level A,}\\
\text{as a proportion of all females at Level B}\end{array}\right]\big/\text{[Year]}}
\\
=\frac{9480/0.65}{3744/0.35}\approx 1.36.
\end{multline}
(the fact that the denominator and numerator both appear as rates on a per-year basis means that the numerator and denominator can both be rescaled to appear as rates on a per-five-year basis, to coincide with the duration of the study).
Equally,  $\sigma$ and $\sigma'$ can be identified with the average probability of success for male and female candidates respectively, hence
\begin{equation}
\frac{\sigma}{\sigma'}=\frac{10.2\%}{8.9\%}=1.14.
\end{equation}
Using the difference operator $\Delta\Phi=\Phi_{P}-\Phi_{Q}$, where $\Phi$ is any one of $\mu$, $\nu$, and $\sigma$, we obtain the following relation via the relation
\begin{equation}
\frac{\Delta (\sigma\nu)}{\sigma \nu}=\frac{\Delta \nu}{\nu}+\frac{\Delta \sigma}{\sigma}.
\end{equation}
As such, we can identify $\Delta \nu/\nu=(1.36-1)/1.36=0.26$ and $\Delta \sigma/\sigma=(1.14-1)/1.14=0.12$. 

\subsection*{France}

\begin{framed}
\noindent Key points:
\begin{itemize}
\item A paper by Bosquet et al.~\cite{bosquet2014gender} examines the French academic promotion system  in Economics between 1991-2008.
\item In France, promotion (in all subjects) is based on a national competition (\textit{concours}).  Candidates are evaluated by an evaluation committee.  
\item There are two academic career tracks: the universities, and the research institutes (CNRS).  The \textit{concours} for each career track has its own characteristics.   The paper compares the outcomes of the two types of \textit{concours}. 
\item The data presented in the study consists of academics who applied for promotion, and those who did not.
The study therefore distinguishes between these two groups, and introduces a probability $p(S)$ of success for a candidate, conditional on his/her having applied for promotion.  
\item A regression model for $p(S)$ is constructed.  This shows there are differences in $p(S)$ for males and females, but they are  not  statistically significant.
\item The main reason for the gender asymmetry at the top of the hierarchy is in the difference in the proportion of males and females who enter the promotion competitions, i.e. a supply-side effect.
\item The regression analysis can be used to compute the parameters of our own model, $\sigma$, $\nu$, etc.
\end{itemize}
\end{framed}

 The authors  constructed a linear model for the conditional probability $p_{it}(S)$ -- this is the probability that individual $i$ is promoted at time $t$, conditional on his/her having entered the \textit{concours}.    The probability is a linear function of gender, age, and academic productivity.
The authors find that there is a difference between the average probability of success for males and females, but this is not statistically significant, either for the CNRS or the Universities.  The authors caution however that the results are not fully conclusive, as the lack of significance could be due
to the sample being small (there are, respectively, only 188 female candidates in the University sample and 41 female candidates in the CNRS sample). 

The authors further investigate the probability $p_{it}(A)$ that an individual $i$ from the entire pool of academic economists in the education system will enter the \textit{concours} in a given year $t$.  The model is again a linear function of gender and other variables.  Here, there is is a gender difference, and it is found to be statistically significant at the 1\% level.  As such, the average probabilities $p_{it}(A)$ are $9\%$ per year for males, $6\%$ for females (universities) and $20.3\%$ per year for males, $11.5\%/\text{[Year]}$ for females (CNRS).

These figures are averages of probabilistic variables, and they can be related to the parameters of our own deterministic model ($\nu$, $\sigma$, etc.). 
The authors identify probabilistic variables which can be related to our own deterministic variables $\mu$, $\nu$, and $\sigma$ (our deterministic variables can be thought of as averages of the corresponding probabilistic ones).  As such, the probabilities $p_{it}(A)$ that an individual will enter a \textit{concours} in a given year $t$, averaged over all males or over all females, correspond to $\nu$ and $\nu'$ in our model.
Using this identification, we obtain $\nu/\nu'\approx 1.5$ and $\nu/\nu'\approx 1.76$ for the universities and CNRS respectively.   
Furthermore, that the average probabilities of success conditional on candidates having entered the \textit{concours} are the same for both males and females (have no statistically significant difference) means that  $\sigma=\sigma'$ in our model. 

\newpage
\subsection*{Italy -- Local Competitions}

\begin{framed}
\noindent Key points:
\begin{itemize}
\item  A paper by de Paola and Scoppa~\cite{de2015gender} examines the Italian academic promotion system for Associate and Full Professor positions in the years 2008-2011.
\item In these years, the system was based on competitions organized at a local level:
\begin{itemize}
\item An institution with a vacancy convenes a committee.
\item The committee selects two successful candidates.
\item The top-ranked candidate receives the position within the institution.
\item The second-ranked candidate is deemed promotable and enters a pool from which other institutions can recruit him/her, should a vacancy arise.
\item The rules for the composition of the panel were changed at the end of 2008 -- before, the committee was local, thereafter, only the first member was local.  The remaining 4 committee members were chosen at random from all full professors in the country, in the relevant field.  
\end{itemize} 
\item Study focuses on the competitions for promotion in Economics and Chemistry.  Competitions for promotions to Associate Professor (Level B) and Full Professor (Level A) are grouped together.  Data are obtained from official sources.
\item A regression model is used to determine how the probability $p$ of success in the competitions depends on gender, as well as other factors.
\item For academics in the competitions,  $p$  depends on gender -- males have on average a higher probability of success.  The difference is $p=13\%$ for males and $p=8.9\%$ for females.  This is found to be statistically significant.
\item The difference almost goes away when candidates are assessed by mixed-gender panels.
\item The results from the regression analysis and calculations to disaggregate the data for the Full Professorial level can be combined to generate estimates the parameters of our own model, $\sigma$, $\nu$, etc.
\end{itemize}
\end{framed}
As in the previous works, the authors carry out a regression analysis to model the probability of a candidate's success as a function of gender, age, productivity, etc; the analysis is aggregated over all competitions, for both Associate and Full Professor grades.  Overall, it is found that gender is the key variable that leads to a statistically significant difference in the probability of a candidate's being promoted.  The average probability of success in a competition is 13\% for males and 8.9\% for females, this is found to be statistically significant at the 1\% level.  As in the Spanish study, this difference vanishes for the case of mixed-gender evaluation panels.

We  estimate average success rates and progression rates for Italy, for promotion to the level of Full Professor.  Because the statistical analysis has aggregated the data for the Full Professor and Associate Professor competitions, the following calculations involve some estimates.  

For these purposes, we use the stated fact in the paper that in the period reported, males made up $66\%$ of the headcount at the Associate Professor level, and females made up $34\%$ -- these combined make up the pool of applicants for the full-professor level.  At the same time, the proportion of male candidates for promotion to the top level was $67.2\%$ and the proportion of female candidates was $32.8\%$.  From these numbers, the ratio $\nu/\nu'$ may be derived:
\begin{multline}
\frac{\nu}{\nu'}=\frac{\left[\begin{array}{c}\text{Number of Males under consideration for promotion to Level A,}\\
\text{as a proportion of all males at Level B}\end{array}\right]\big/\text{[Year]}}
{\left[\begin{array}{c}\text{Number of Females under consideration for promotion to Level A,}\\
\text{as a proportion of all females at Level B}\end{array}\right]\big/\text{[Year]}}
\\
=\frac{67.2/66}{32.8/34}\approx 1.06.
\end{multline}
A priori, it is not obvious if this difference is statistically significant.
However, the results of a follow-on study (see below) suggest that the difference is statistically significant.

At the same time, the mean success rate (averaged over male and female candidates) for promotion to the top level is 10.2\%.  There is also a known discrepancy between the success rates for males and females (averaged over all competitions, for both associate and full professor levels), with $\Delta \sigma=4.7\%$, in favour of males.  Assuming rather conservatively that this difference applies uniformly 
across both competitions, we obtain the following simultaneous equations:
\begin{eqnarray*}
\sigma_{\text{avg}}=0.102&=&\sigma_{\text{male}}(0.672)+\sigma_{\text{female}}(0.328),\\
\Delta \sigma=0.047&=&\sigma_{\text{male}}-\sigma_{\text{female}},
\end{eqnarray*}
hence $\sigma_{\text{male}}=\sigma_{\text{avg}}+\Delta\sigma(0.328)=11.7\%$, hence (in the original notation), $\sigma/\sigma'=11.7/(11.7-4.7)=1.67$.
Accordingly, we can estimate
\begin{equation}
\frac{\Delta (\sigma\nu)}{\sigma \mu}=\frac{\Delta \nu}{\nu}+\frac{\Delta \sigma}{\sigma},
\end{equation}
with $\Delta \nu/\nu=(1.06-1)/1.06=0.06$ and $\Delta \sigma/\sigma=(1.67-1)/1.67=0.40$.  Hence, the asymmetry in the progression rates in the Italian data in the years 2008-2011 is due almost entirely to in-competition bias.  

\newpage
\subsection*{Italy -- National Competitions}

\begin{framed}
\noindent Key points:
\begin{itemize}
\item  A follow-on paper by de Paola et al.~\cite{de2015gender_b} examines the reformed Italian academic promotion system for Associate and Full Professor positions in the years 2012-2014.
\item In these years, the system was based a national examination (\textit{Abilitazione Scientifica Nazionale}, ASN):
\begin{itemize}
\item Expert evaluation committees (5 members) are convened, by random selection of all full professors.
\item A candidate submits his/her CV.  The CV is assessed by one of the committees, appropriate by discipline.
\item A candidate who is successful in the ASN is deemed promotable.
\item A university with a vacancy may recruit only candidates who have passed the ASN -- passing the ASN is therefore necessary but not sufficient for promotion.
\end{itemize} 
\item Study focuses on the database of all Italian academics -- those who enter the national competition, and those who do not.
\item The study can therefore identify the probability that an individual will enter the competition.
\item This probability depends on gender --  a female has a lower probability of applying for promotion of 5.2 percentage points.
\end{itemize} 
\end{framed}
In this follow-on paper, a more recent competition (2012-2014) is studied in depth (competitions for both Associate Professor and Full Professor are again grouped together).  Here, data for the entire set of Italian academics is examined (i.e. data corresponding to academics who do not enter the competitions, as well as those that do).  In this way, the authors are able to examine the probability that an individual academic will enter the competition, as well as the probability that an individual academic will succeed in the competition, \textit{conditional on having entered}.  The authors find that overall, a female academic has a lower probability of entering the competition (A difference of 5.2 percentage points). 
The main effect documented in the first study~\cite{de2015gender_b} (different male/female success rates) vanishes for the ASN -- the probability that an individual academic will succeed in the competition, conditional on having entered, is the same for males and females (no statistically significant difference).


\end{document}